
\documentclass[letterpaper, 10 pt, conference]{ieeeconf}  

\IEEEoverridecommandlockouts                              

\overrideIEEEmargins                                      




\usepackage{amsmath} 
\usepackage{amssymb}  
\usepackage{kotex}
\usepackage{color}
\usepackage{cite}
\usepackage{graphicx} 
\usepackage{tikz} 
\usepackage{pgfmath} 
\usepackage{balance} 
\usepackage{subfigure}
\usepackage{multirow}
\usepackage{bm}

\newtheorem{assumption}{Assumption}
\newtheorem{definition}{Definition}
\newtheorem{remark}{Remark}
\newtheorem{theorem}{Theorem}
\newtheorem{lemma}{Lemma}
\newtheorem{corollary}{Corollary}
\allowdisplaybreaks

\title{\LARGE \bf 
A Nonlinear Scaling-based Design of Control Lyapunov-barrier Function for Relative Degree 2 Case and its Application to Safe Feedback Linearization
}

\author{Haechan Pyon and Gyunghoon Park$^{*}$
\thanks{H. Pyon and G. Park are with School of Electrical and Computer Engineering, University of Seoul, Republic of Korea. {\{\tt\small hc.pyon, gyunghoon.park@uos.ac.kr\}}}
\thanks{
*Corresponding author: Gyunghoon Park
}%
\thanks{
{This work was supported by the Ministry of Trade, Industry, \& Energy (MOTIE, Korea) under the Industrial Technology Innovation Program, Grant No. 20023606, ‘‘Rapid reconfigurable robotic workcell technology for fast automation and modification in cell production processes including assembly, packaging, and inspection’’.}
}
}

\begin{document}
\allowdisplaybreaks

\maketitle
\thispagestyle{empty}
\pagestyle{empty}

\begin{abstract}
In this paper we address the problem of control Lyapunov-barrier function (CLBF)-based safe stabilization for a class of nonlinear control-affine systems.
A difficulty may arise for the case when a constraint has the relative degree larger than $1$, at which computing a proper CLBF is not straightforward. 
Instead of adding an (possibly non-existent) control barrier function (CBF) to a control Lyapunov function (CLF), our key idea is to simply scale the value of the CLF on the unsafe set, by utilizing a sigmoid function as a scaling factor.
We provide a systematic design method for the CLBF, with a detailed condition for the parameters of the sigmoid function to satisfy. 
It is also seen that the proposed approach to the CLBF design can be applied to the problem of task-space control for a planar robot manipulator with guaranteed safety, for which a safe feedback linearization-based controller is presented. 

\end{abstract}

\section{Introduction}


As modern control systems are predicted to operate more often in the workspace shared with humans and/or other objects, {\it safety} has gained a significant attention in control society. 
The quantity of how a system is safe has been measured by a so-called {\it barrier function} (BF), which has been used to characterize an unsafe state by having a specific sign or sufficiently large value, so that the safety can be analyzed by observing a change of the value of the BF. 
The control barrier function (CBF) is an extended version of the BF defined for the systems having an input, based on which many researchers have proposed various safety-critical control schemes \cite{wieland2007constructive, ames2019control}. 


As safety and stability are regarded as two important control objectives to be satisfied, it is not surprising that a variety of research efforts have been paid to resolving {\it the problem of safe stabilization}.
A considerable approach to controller design for the safe stabilization is based on the {\it control Lyapunov-barrier function} (CLBF) that has mixed properties of the control Lyapunov function (CLF) for stability and the CBF for safety  \cite{romdlony2016stabilization, romdlony2014uniting, wu2019control}. 
A CLBF has been designed to remain positive inside the unsafe set (as similar to the conventional CBFs), and to admit a control input with which the CLBF decreases as time goes on (like other CLFs), for whose design the {\it Sontag's universal formula} can be applied \cite{lin1991universal}. 

Despite its success in the literature, however, the CLBF has been employed mostly for limited cases, as it is usually chosen as a linear combination of a CLF and a (conventional) CBF so that safety-related constraints cannot be handled properly. 
As a solution to overcoming this drawback, in this paper we propose a new design methodology for the CLBF, with which a class of constraints having relative degree $2$ can be considered.  
For this, we employ a nonlinear scaling to a CLF (instead of linearly combining the CBF with the CLF), by weighting the existing CLF on the unsafe set with sigmoid function as scaling factors.
In doing so, the proposed CLBF does not explicitly utilize a CBF in its design, which allows to deal with higher relative degree cases. 
It should be pointed out that for ease of construction, we restrict ourselves to the class of the 2nd-order nonlinear affine systems, while extension to more general case is left to future works.   
We theoretically derive a sufficient condition for the parameters of the sigmoid function (i.e., scaling function) with which the resultant scaled CLF becomes a CLBF. 
Moreover, reporting a new finding that the CLBF-based controller can be designed as an {\it add-on type} controller attached to an existing baseline controller, we also show below that the proposed method for the CLBF design can be successfully applied to the problem of {\it safe} task-space control for planar robot manipulators.

Before closing this section, it is remarked that for ensuring safety of a system having high relative degree, several attempts have been made in the literature. 
Most of the relevant works have focused on extending the notion of the CBF to the higher-order case, which inevitably requires a set of inequalities to be solvable at the same time.
Unfortunately, guaranteeing the simultaneous feasibility of these inequalities is not trivial, and even the structure of the resultant CBF-based controller becomes relatively complicated as higher relative degree is considered.  
Compared with the existing high-order CBF approaches, the proposed one has two distinguishable advantages: first, we theoretically show that it is always possible to construct a proper CLBF for the case of relative degree $2$; moreover, once a CLBF is well-selected, the Sontag's universal formula-based control law can still be utilized for the high relative degree case, which leads to simplicity in controller design. 

{\it Notations.} 
For a scalar-valued function $V$ and the vector-valued function $f$, the notation $L_fV(x)$ is used for the Lie derivative $({\partial V}/{\partial x})f(x)$. The space $C^1(\mathbb{R}^n, \mathbb{R})$ consists of all continuously differentiable functions $F:\mathbb{R}^n \to \mathbb{R}$. 
For a square and symmetric matrix $P$, $\lambda_{m}(P)$ and $\lambda_{M}(P)$ denote the smallest and largest eigenvalue of $P$, respectively.
The matrix $I_n \in \mathbb{R}^{n \times n}$ stands for the $n \times n$ identity matrix. 
Given a set $\mathcal{D}$, denote the boundary of $\mathcal{D}$ by $\partial \mathcal{D}$, the closure of $\mathcal{D}$ by ${\rm cl}(\mathcal{D})$, and the interior of $\mathcal{D}$ by $\mathrm{int}(\mathcal{D})$, respectively.

\section{Elementary Result: Nonlinear Scaling-based CLBF Design for Relative Degree 2 Cases}\label{sec:basic}

\subsection{Problem Formulation}
Consider a 2nd-order, nonlinear, single-input control-affine system 
\begin{equation} \label{eq:system}
    \begin{aligned}
        \dot{x}_1 &= x_2,\\
        \dot{x}_2 &= f(x) + g(x)u
    \end{aligned}
\end{equation}
where $x=(x_1,x_2)\in \mathbb{R}^2$ is the state vector, $u\in \mathbb{R}$ is the input, and the functions $f$ and $g$ are assumed to be Lipschitz on the region of interest $\mathcal{X}$ that is compact and contains the origin $x=0$.
Moreover, $g(x)\neq 0$ for all $x\in \mathcal{X}$. 
For ease of explanation, define $F(x):=\begin{bmatrix}
    x_2 & f(x)
\end{bmatrix}^\top$ and $G(x):= \begin{bmatrix}
    0 & g(x)
\end{bmatrix}^\top$ so that \eqref{eq:system} can be compactly rewritten as $\dot{x} = F(x) + G(x)u$. 

A formal definition of the safety of a system is given as in the following. 
\begin{definition}{(Safety, \cite{wieland2007constructive})}\label{def:safety} 
 Given a set of unsafe states $\mathcal{D}$, an autonomous system $\dot{x} = F(x)$ is said to be {\it safe} if the state $x(t)$ satisfies
 \begin{equation} \label{eq:unsafecondition}
 x(t) \notin \text{cl}(\mathcal{D}), \quad \forall t \geq 0.
 \end{equation}
 $\hfill\square$
\end{definition} 

When considering a non-autonomous system \eqref{eq:system}, it is usually desired to construct a state-feedback controller $u=\kappa(x)$
with which stability and safety are achieved simultaneously, in a sense that 
\begin{itemize}
    \item the origin $x=0$ of the closed-loop system $\dot{x} = F(x) + G(x)\kappa(x)$ is asymptotically stable, and 
    \item the closed-loop system $\dot{x} = F(x) + G(x)\kappa(x)$ is safe (so that $x(t)$ satisfies \eqref{eq:unsafecondition}). 
\end{itemize}
Throughout this paper, such a problem is called the {\it safe stabilization} problem.

As a tool for safe stabilization of a non-autonomous system, the control Lyapunov-barrier function (CLBF) has been studied for a decade in the relevant literature, which is the main topic of this work. 
An usual definition of the CLBF is stated below. 
\begin{definition}{(CLBF, \cite{romdlony2016stabilization})}
    Given a system $\dot{x}=F(x) + G(x)u$ and an unsafe set $\mathcal{D} \subset \mathbb{R}^n$, a proper and lower-bounded function $W \in C^1 (\mathbb{R}^n, \mathbb{R})$ satisfying
\begin{subequations} \label{eq:CLBF}
    \begin{align}
        & W(x) >0 \quad \forall  x \in \mathcal{D} \label{eq:CLBF_a} \\
        & L_FW(x) < 0\quad  \forall x \in \{z \in \mathbb{R}^n \setminus (\mathcal{D} \cup \{0\}) : L_GW(z) = 0\} \label{eq:CLBF_b} \\
        & \mathcal{U} := \{x \in \mathbb{R}^n : W(x) \leq 0\} \neq \emptyset \label{eq:CLBF_c} \\
        & \text{cl}(\mathbb{R}^n \setminus (\mathcal{D} \cup \mathcal{U})) \cap \text{cl}(\mathcal{D}) = \emptyset \label{eq:CLBF_d}
    \end{align}
\end{subequations}
is called a {\it control Lyapunov-barrier function}. 
$\hfill\square$
\end{definition}

The main aim of this section is to construct a CLBF $W(x)$ for the non-autonomous system \eqref{eq:system} and for given unsafe set $\mathcal{D}$. 
Before going on further, it is noted that the condition~\eqref{eq:CLBF_d} for $W$ being a CLBF is needed particularly when global safe stabilization is desired, as remarked in \cite{romdlony2016stabilization}. 
In our case, since the region $\mathcal{X}$ of our interest is supposed to be bounded, there seems to be no reason to think about \eqref{eq:CLBF_d} anymore. 
With this kept in mind, the following definition introduces a {\it weaker} version of the conventional CLBF, in which the domain of functions is restricted to the bounded set $\mathcal{X}$.  
\begin{definition}{(Weak CLBF)}\label{def:weak_CLBF}
    Given a system $\dot{x}=F(x) + G(x)u$, a compact set $\mathcal{X}\subset \mathbb{R}^n$, and an unsafe set $\mathcal{D} \subset \mathcal{X}$, a proper and lower-bounded function $W \in C^1 (\mathbb{R}^n, \mathbb{R})$ is called a {\it weak CLBF} with respect to $\mathcal{D}$ if
\begin{subequations} \label{eq:weak_CLBF}
    \begin{align}
        & W(x) >0 \quad \forall  x \in \mathcal{D} \label{eq:weak_CLBF_a} \\
        & L_FW(x) < 0\quad  \forall x \in \{z \in \mathcal{X} \setminus (\mathcal{D} \cup \{0\}) : L_GW(z) = 0\} \label{eq:weak_CLBF_b} \\
        & \mathcal{U} := \{x \in \mathcal{X} : W(x) \leq 0\} \neq \emptyset. \label{eq:weak_CLBF_c}    
        \end{align}
\end{subequations}
 $\hfill\square$
\end{definition}

In the rest of this section, we propose a design method for a weak CLBF, provided that the unsafe set $\mathcal{D}$ to be considered is defined with a linear inequality but has relative degree\footnote{The \textit {relative degree} of a function $B$ with respect to $\dot{x} = F(x) + G(x)u$ is defined as the number of times we need to differentiate the function along the system dynamics until  $u$ explicitly shows.} larger than $1$, as follows:
\begin{assumption}\label{asm:unsafe_set}
    For the system \eqref{eq:system}, the unsafe set $\mathcal{D} \subset \mathbb{R}^{2}$ is given by
    \begin{align}
        \mathcal{D} = \{ (x_1,x_2)\in \mathcal{X}: x_1\leq d\} \label{ineq:unsafeset}
    \end{align}
    where $ d<0$. $\hfill\square$
\end{assumption}

It should be noted that, even though the defining condition for $\mathcal{D}$ seems to be simple enough, the problem of constructing a CLBF $W(x)$ for $\mathcal{D}$ in \eqref{ineq:unsafeset} remains challenging.
This is mainly because, the defining inequality $x_1\leq d$ for $\mathcal{D}$ is independent of $x_2$. 
    Indeed, most CLBFs presented in the literature have the form of $ W = V + \theta B -k$ with a 
    CLF $V$ and a 
    CBF $B$ with respect to $\mathcal{D}$.
    On the other hand, as a barrier candidate for $\mathcal{D}$ above, $B(x)$ is expected to be a function of only $x_1$. 
    This directly implies that $B(x)$ has the relative degree larger than $1$, or equivalently, $\dot{B}(x) = (\partial B/\partial x_1)\dot{x}_1 = (\partial B/\partial x_1)x_2$ does not depend on $u$.
    Thus $\dot{B}(x) = L_F B(x)$ may fail to be negative at some $x$ satisfying $L_G W(x) = 0$, so that there is no guarantee for $B(x)$ to be a CBF. 

We close this subsection by introducing an extra assumption that the system \eqref{eq:system} itself is asymptotically stable and admits a quadratic-type control Lyapunov function (CLF).



\begin{assumption}\label{asm:CLF}
    For the system \eqref{eq:system}, there exists a square matrix $P^{\top}=P >0$ such that the corresponding Lyapunov function candidate
    \begin{subequations}\label{asm:stability}
    \begin{align}\label{asm:stability_a}
        &  V(x) = \frac{1}{2}x^{\top}Px
    \end{align}
    satisfies
    \begin{align}\label{asm:stability_b}
        & \frac{\partial V}{\partial x} F(x) <0, \quad \forall x \in \{z \in \mathbb{R}^2\setminus\{0\} : L_GV(z) =0 \}.
    \end{align}
    \end{subequations} $\hfill\square$
\end{assumption}

It should be noted in advance that, while we assume the stability of \eqref{eq:system} as above, this requirement can be readily relaxed when the theory is applied for the controller design. 
We will revisit this point shortly in the upcoming section.



\begin{remark}
    Even though there seem to be two possible cases of defining an unsafe set defined with $x_1$ (one for $x_1 \leq d$, while the other for $x_1 \geq d$), we simply assume the first one because one in fact implies the other. 
    Indeed, even if our system is constrained as in the second case, we can reach the opposite case by introducing a new coordinate $\hat x = (\hat x_1, \hat x_2) := (-x_1, -x_2)$ and redefining the system dynamics $\hat f (\hat x) := f(-x)$, $\hat g (\hat x) := g(-x)$,  without loss of generality. 
    $\hfill\square$
\end{remark}

\subsection{Nonlinear Scaling-based Design of Weak CLBF}

In this section, we propose a new method to construct a weak CLBF for the 2nd-order nonlinear system \eqref{eq:system} with the unsafe set $\mathcal{D}$ in \eqref{ineq:unsafeset}.
Instead of relying on the existence of the CBF as in the existing work \cite{romdlony2016stabilization} (which may not be trivial for cases of higher relative degree), our core idea is to simply rescale the existing CLF $V(x)$ in \eqref{asm:stability_a} in a sense that, the resulting function has a relatively larger value near the unsafe set.
More specifically, a candidate for the weak CLBF to be proposed is given by 
\begin{align} \label{CLBF_nonlinear_scaled}
    W(x) := \big(1 + \theta \sigma(x_1)\big) V(x) - k
\end{align}
where $V(x)$ is the quadratic CLF in \eqref{asm:stability_a}, and $\theta>0$ and $k >0$ are parameters to be determined later. 
Most importantly, the function $\sigma(x_1)$ of $x_1$ represents the quantity of rescaling the CLF $V(x)$ with $\mathcal{D}$ taken into account. 
In this work, we construct the rescaling function $\sigma$ as a sigmoid function
\begin{align}\label{sigmoid}
    \sigma (x_1) = \frac{1}{1 + \exp(l(x_1 - d - (\delta/2)))}
\end{align}
where $d$ is the constant that defines $\mathcal{D}$, $l>0$ determines the slope of the sigmoid function $\sigma$, and $\delta>0$ can be understood as a margin for safety. 
Since the scaling factor $\theta\sigma$ for $W(x)$ is a nonlinear function of $x_1$, we term the proposed approach to constructing a weak CLBF a {\it nonlinear scaling approach} throughout this work.

We now have four parameters $l$, $\theta$, $k$, and $\delta$ left to be determined, whose detailed conditions for selection will be provided in a sequel.  
Before going on further, recall first that the CLF condition for $V(x)$ guarantees existence of control input that ensure the forward invariance of a level set
 \begin{align}\label{Omega1_set}
     \Omega := \{x\in \mathcal{X} : V(x) \leq v_2\}
 \end{align}
 which can be understood as a (sub)set of possible initial conditions $x(0)$.
It is obvious that, if the system is initiated in a small $\Omega$ such that $\Omega \cap {\rm cl}(\mathcal{D}) = \emptyset$, the system is inherently safe (i.e., $x(t)\notin {\rm cl}(\mathcal{D})$ for all $t$).
With this in mind, to allow a non-trivial set of initial values, in what follows we consider the level set $\Omega$ \eqref{Omega1_set} with sufficiently large $v_2$, without loss of generality, that satisfies 
\begin{align}\label{avoidingtrivalcase}
    \Omega \cap \mathcal{D} \neq \emptyset
\end{align}
which allows to avoid some trivial cases. 
We will show below that a subset 
\begin{align}\label{C_Omega}
    \mathcal{C}_{\Omega} := \left\{ x \in \mathcal{X} :
V(x) \leq v_2 , \:
x_1 \geq d + \delta
\right\}
\end{align}
of $\Omega \cap \big(\mathcal{X}\setminus \mathcal{D}\big)$ 
is indeed a set of the initial conditions $x(0)$ with which the associated state $x(t)$ is able to remain the safe set $\mathcal{X}\setminus \mathcal{D}$ for the entire time period with a proper control.

\begin{figure}
\centering
    \subfigure[Level set $\Omega$]{\includegraphics[width = 0.32\linewidth]{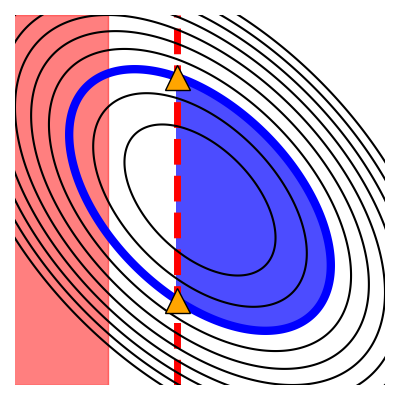}}
    \hfill
    \subfigure[Safe set $\mathcal{U}$]{\includegraphics[width=0.32\linewidth]{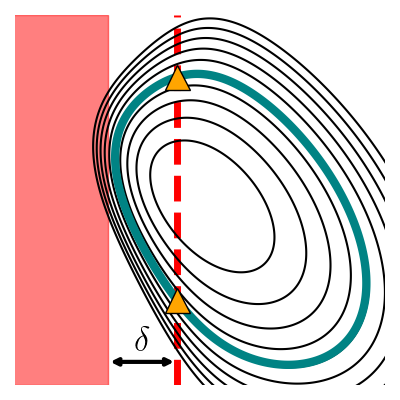}}
    \hfill
    \subfigure[Both sets]{\includegraphics[width = 0.32\linewidth]{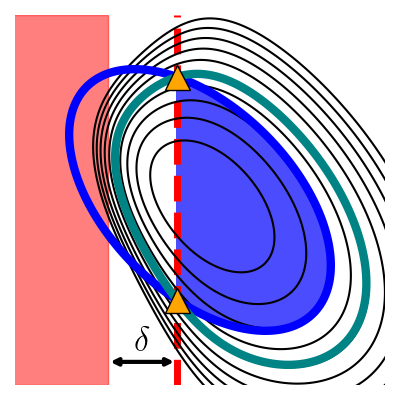}
    }
    \caption{Illustration of the unsafe set $\mathcal{D}$ (red), the level set $\Omega$ with the level $v_2$ of the CLF $V$, and the safe set $\mathcal{U}$ of the proposed CLBF $W$ with the given level $v_2$. Marked points (orange triangle) are intersections of $\partial \Omega$ and area satisfying $x_1 = d + \delta$ (red dashed line), where a rescaled value $(1+\theta \sigma(d + \delta))v_2$ of the level set $\Omega$ at this point is selected as DC offset $k$ of $W$. The set $\mathcal{C}_{\Omega}$ (shown as blue coloured area in $(\textrm{a})$) remains in $\mathcal{U}$ of $W$.}
    \label{set comparison}
\end{figure}



With the level $v_2$ of $\Omega$ given, the parameters can be taken via the following steps. 
First, the slope $l$ of the sigmoid function $\sigma$ is chosen such that
\begin{align} \label{l_choosing}
    0<l \leq \frac{2}{\gamma},\quad \gamma:=\sup_{x\in \mathcal{X}} \: x_1 < \infty. 
\end{align}
In turn, take the safety margin $\delta$ in \eqref{sigmoid} as 
\begin{align}\label{delta_choosing}
    \delta > \frac{2}{l}\cdot\ln\left(\frac{v_2}{v_1}\right) \quad v_1 := \min_{x \in \mathcal{D}}V(x)
\end{align}
where $v_1$ must be smaller than $v_2$ due to \eqref{avoidingtrivalcase}. 
Next, choose the scaling factor $\theta$ in \eqref{CLBF_nonlinear_scaled} such that
\begin{align} \label{theta_choosing}
\theta > \frac{v_2 - v_1}{\sigma_1v_1 - \sigma_2v_2}
\end{align}
with $\sigma_1 := \sigma(d)$ and $\sigma_2 := \sigma(d+\delta)$. 
It is important to note that $\sigma_1v_1 - \sigma_2v_2>0$ holds with $\delta$ satisfying \eqref{delta_choosing}, as
\begin{align*}
    \frac{\sigma_1}{\sigma_2} = \frac{1+e^{({l\delta}/{2})}}{1+e^{-({l\delta}/{2})}} = e^{({l\delta}/{2})} >\frac{v_2}{v_1}.
\end{align*}
At last, we select $k$ for the DC offset of \eqref{CLBF_nonlinear_scaled} as
\begin{align} \label{k_choosing}
    k := (1+\theta \sigma_2)v_2.
\end{align}





We are now ready to present our main theoretical result, which says that the nonlinearly-scaled  $W(x)$ in \eqref{CLBF_nonlinear_scaled} is indeed a weak CLBF for the system \eqref{eq:system} and the unsafe set \eqref{ineq:unsafeset}.

\begin{theorem}\label{Main_theorem}
Suppose that Assumptions~\ref{asm:unsafe_set} and \ref{asm:CLF} hold, and the parameters $l$, $\delta$, $\theta$, $k$ are chosen such that \eqref{l_choosing}, \eqref{delta_choosing}, \eqref{theta_choosing}, and \eqref{k_choosing} are satisfied. 
   Then the function $W(x)$ of $x$ in  \eqref{CLBF_nonlinear_scaled} is a weak CLBF with respect to $\mathcal{D}$. 
   Moreover, there is no other stationary point in $\mathcal{U}=\{x \in \mathcal{X} : W(x) \leq 0\}$ except the origin. $\hfill\square$
 \end{theorem}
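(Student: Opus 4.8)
The plan is to verify, for the explicit function $W$ in \eqref{CLBF_nonlinear_scaled}, the three defining properties \eqref{eq:weak_CLBF_a}--\eqref{eq:weak_CLBF_c} of a weak CLBF (Definition~\ref{def:weak_CLBF}) together with the final uniqueness claim, treating them roughly in increasing order of difficulty. The structural requirements are immediate: $W\in C^1(\mathbb{R}^2,\mathbb{R})$ since $V$ is quadratic and $\sigma$ is $C^\infty$, and $W$ is lower bounded (by $-k$) and proper because $V\ge 0$ and the scaling factor $1+\theta\sigma(x_1)\ge 1$ everywhere. Property \eqref{eq:weak_CLBF_c} then follows at once from $W(0)=(1+\theta\sigma(0))V(0)-k=-k<0$, so that $0\in\mathcal{U}$; along the way I would also confirm the inclusion $\mathcal{C}_\Omega\subseteq\mathcal{U}$ announced before the theorem for \eqref{C_Omega}, by noting that on $\mathcal{C}_\Omega$ one has $x_1\ge d+\delta$, hence $\sigma(x_1)\le\sigma_2$ by monotonicity of the sigmoid, together with $V(x)\le v_2$, so that $W(x)\le(1+\theta\sigma_2)v_2-k=0$ by the choice \eqref{k_choosing} of $k$.

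For the positivity property \eqref{eq:weak_CLBF_a}, fix $x\in\mathcal{D}$, so $x_1\le d$; monotonicity gives $\sigma(x_1)\ge\sigma_1$, and the definition of $v_1$ gives $V(x)\ge v_1>0$ (positivity of $v_1$ because $\mathcal{D}$ is compact and nonempty by \eqref{avoidingtrivalcase}, $V$ is positive definite, and $0\notin\mathcal{D}$ as $d<0$; also $v_1<v_2$ by \eqref{avoidingtrivalcase}). Hence $W(x)\ge(1+\theta\sigma_1)v_1-k$, and it remains to show $(1+\theta\sigma_1)v_1>(1+\theta\sigma_2)v_2=k$, which after rearrangement is exactly the lower bound \eqref{theta_choosing} on $\theta$ --- legitimate only because $\sigma_1v_1-\sigma_2v_2>0$, and this in turn is the displayed identity $\sigma_1/\sigma_2=e^{l\delta/2}>v_2/v_1$ guaranteed by the choice \eqref{delta_choosing} of $\delta$.

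The heart of the argument is the CLF-type inequality \eqref{eq:weak_CLBF_b}. Differentiating \eqref{CLBF_nonlinear_scaled} and using $G(x)=[\,0\ \ g(x)\,]^\top$ (so $x_1$ does not see $u$) yields $L_GW(x)=(1+\theta\sigma(x_1))L_GV(x)$ and $L_FW(x)=\theta\sigma'(x_1)x_2V(x)+(1+\theta\sigma(x_1))L_FV(x)$; since $1+\theta\sigma>0$, the set $\{L_GW=0\}$ equals $\{L_GV=0\}$, which, writing $P=[p_{ij}]$ and using $g(x)\ne 0$, is exactly the line $\ell=\{p_{12}x_1+p_{22}x_2=0\}$ through the origin. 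Substituting $x_2=-(p_{12}/p_{22})x_1$ on $\ell$ gives the closed forms $V=\tfrac{\det P}{2p_{22}}x_1^2$ and $L_FV=-\tfrac{p_{12}\det P}{p_{22}^2}x_1^2$, and Assumption~\ref{asm:CLF} on $\ell\setminus\{0\}$ then forces $p_{12}>0$. With $\sigma'=-l\sigma(1-\sigma)$, the term $\theta\sigma'x_2V$ equals a positive coefficient times $x_1^3$, so after factoring out the positive quantity $\tfrac{p_{12}\det P}{p_{22}^2}x_1^2$ (positive because $x\in\ell\setminus\{0\}$ forces $x_1\ne 0$), $L_FW$ on $\ell\setminus\{0\}$ has the sign of the bracket $\tfrac{1}{2}\theta l\sigma(1-\sigma)x_1-(1+\theta\sigma)$. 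For $x_1\le 0$ the first term is $\le 0$, so the bracket is negative; for $x_1>0$ I would estimate $\tfrac{1}{2}\theta l\sigma(1-\sigma)x_1\le\theta\sigma\cdot\tfrac{1}{2}l\gamma\le\theta\sigma<1+\theta\sigma$, using $x_1\le\gamma$ and the slope bound \eqref{l_choosing} (which makes $l\gamma/2\le 1$). Since $\{z\in\mathcal{X}\setminus(\mathcal{D}\cup\{0\}):L_GW(z)=0\}$ is a subset of $\ell\setminus\{0\}$, \eqref{eq:weak_CLBF_b} follows.

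The final claim is then a corollary: if some $x^*\in\mathcal{U}$ with $x^*\ne 0$ satisfied $\nabla W(x^*)=0$, then $L_GW(x^*)=0$ and $L_FW(x^*)=0$; but \eqref{eq:weak_CLBF_a} already forces $\mathcal{U}\cap\mathcal{D}=\emptyset$, so $x^*\in\mathcal{X}\setminus(\mathcal{D}\cup\{0\})$ with $L_GW(x^*)=0$, whence \eqref{eq:weak_CLBF_b} gives $L_FW(x^*)<0$, a contradiction. I expect the one genuine obstacle to be the estimate behind \eqref{eq:weak_CLBF_b} on the portion of $\ell$ where the scaling term $\theta\sigma'x_2V$ is \emph{positive} (i.e.\ $x_2<0$, equivalently $x_1>0$): there one must show the scaling-induced growth cannot overwhelm the built-in decrease $L_FV<0$, which is precisely what the slope constraint $l\le 2/\gamma$ secures once the restriction to $\ell$ has collapsed every term to a scalar multiple of a power of $x_1$; the remaining effort --- choosing $\delta$ so that $\sigma_1v_1-\sigma_2v_2>0$ and then $\theta$, $k$ consistently --- is routine bookkeeping of constants.
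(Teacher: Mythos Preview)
Your proposal is correct and follows essentially the same route as the paper: the same monotonicity/level-set bookkeeping for \eqref{eq:weak_CLBF_a} and \eqref{eq:weak_CLBF_c}, and the same reduction of \eqref{eq:weak_CLBF_b} to the line $\ell=\{p_{12}x_1+p_{22}x_2=0\}$, followed by the identical factorization of $L_FW$ and the bound $l\gamma/2\le 1$ to control the bracket. Your treatment of the stationary-point claim is slightly slicker than the paper's---rather than recomputing $\partial W/\partial x_1$ on $\ell$ and invoking \eqref{condition_l_inproof} again, you observe that $\nabla W(x^*)=0$ forces $L_GW(x^*)=L_FW(x^*)=0$ and then appeal directly to the already-established \eqref{eq:weak_CLBF_a} (to place $x^*\notin\mathcal{D}$) and \eqref{eq:weak_CLBF_b} (to contradict $L_FW(x^*)=0$); this is a clean shortcut, though it only rules out stationary points in $\mathcal{U}$, whereas the paper's direct computation actually shows the origin is the unique stationary point in all of $\mathcal{X}$.
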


\begin{proof}
Note first that the function $W$ is obviously proper and lower-bounded in $\mathbb{R}^2$ by
\begin{align*}
    \lambda_m(P)\|x\|^2 -k \leq W(x) \leq (1+\theta)\lambda_M(P)\|x\|^2 -k
\end{align*}
and $-k \leq W(x)$, $\forall x \in \mathbb{R}^2$. 
In the rest of the proof, we will show that all the conditions in \eqref{eq:weak_CLBF} hold, from which $W(x)$ is proven to be a weak CLBF.

{\it (Condition~\eqref{eq:weak_CLBF_a})} It is noted that $\sigma <\sigma_1$ in $\mathcal{D}$ by the construction of  $\sigma$ in \eqref{sigmoid}. 
Then $\forall x \in \mathcal{D}$, we have  
{
\begin{align*}
W(x) & = (1 + \theta \sigma)V - k\\
& \geq (1 + \theta \sigma_1)v_1 - (1+\theta \sigma_2)v_2\\
& = (v_1 - v_2) + \theta (\sigma_1 v_1 - \sigma_2 v_2)\\
& = (\sigma_1 v_1 - \sigma_2 v_2)
\left(
\theta - \frac{v_2-v_1}{\sigma_1 v_1 - \sigma_2 v_2}
\right) > 0
\end{align*} 
} 
where the last inequality is carried out by utilizing \eqref{delta_choosing} and \eqref{theta_choosing}, which concludes the proof of the first item. 

{\it (Condition~\eqref{eq:weak_CLBF_b})} For the proof of \eqref{eq:weak_CLBF_b}, we first consider $L_GW(x)$ that is expressed as
\begin{align*}
L_GW(x) = \frac{\partial W}{\partial x_2}g(x)=(1+\theta \sigma)\frac{\partial V}{\partial x_2}g(x)
\end{align*}
which vanishes if and only if 
\begin{align}\label{L_GW=0_condition}
    \frac{\partial V}{\partial x_2} = p_{12}x_1 + p_{22}x_2 =0
\end{align}
where $p_{ij}$ represents the $(i,j)$-component of the matrix $P$. 
It is noted for a future use that every $p_{ij}$ are positive, while the proof of this claim is given in Appendix. 
On the subspace defined by \eqref{L_GW=0_condition}, the quadratic CLF $V(x)$ has the value
\begin{align*}
V(x) = V^\circ(x_1) := \frac{{\rm det}(P)}{2p_{22}}
x_1^2
\end{align*}
by simply replacing $x_2$ in $V(x)$ with $-(p_{12}x_1)/(p_{22})$.
On the other hand, one has
\begin{align*}
\frac{\partial W}{\partial x_1} 
& = \theta V \frac{\partial \sigma}{\partial x_1}  + (1 + \theta \sigma) \frac{\partial V}{\partial x_1}\\
& = \theta V (-l\sigma(1-\sigma)) + (1 + \theta \sigma)(p_{11}x_1 + p_{12}x_2)
\end{align*}
where we have used an useful property of the sigmoid function $\sigma$: that is, $(\partial \sigma)/(\partial x_1) = -l\sigma (1-\sigma)$. 
Thus, on the subspace at which \eqref{L_GW=0_condition} holds, we have
\begin{align}
\frac{\partial W}{\partial x_1}  & = -V^\circ \theta l\sigma(1-\sigma) + (1 + \theta \sigma) 
\left(
p_{11}x_1 - \frac{(p_{12})^2}{p_{22}}x_1
\right)\notag \\
&=  - \dfrac{{\rm det}(P)}{2p_{22}} \theta l\sigma(1-\sigma)x_1^2
 + \frac{(1 + \theta \sigma)}{p_{22}}({\rm det}(P))x_1 \notag\\
& = 
\frac{\theta \sigma {\rm det}(P)}{p_{22}}
\left(-\frac{l}{2}(1-\sigma)x_1 + 1 \right)x_1+\frac{{\rm det}(P)}{p_{22}}x_1 \notag\\
& = \frac{{\rm det}(P)}{p_{22}}
\bigg[\theta \sigma \left(-\frac{l}{2}(1-\sigma)x_1 + 1 \right)+1\bigg]
x_1.\label{eq:dWdx1}
\end{align}
Finally, the term $L_FW(x)$ is computed as
\begin{align*}
& L_FW(x) = \begin{bmatrix}
\dfrac{\partial W}{\partial x_1} & 
\dfrac{\partial W}{\partial x_2}\end{bmatrix} 
\begin{bmatrix}
x_2\\
f
\end{bmatrix} = \frac{\partial W}{\partial x_1} x_2
\\
& = \dfrac{{\rm det}(P)}{p_{22}}
\bigg[
\theta \sigma 
\left(
1 -\dfrac{l}{2}(1-\sigma)x_1 
\right)
+
1
\bigg]
x_1
\left(
-\dfrac{p_{12}}{p_{22}}x_1
\right)
\\
& = -\dfrac{p_{12}{\rm det}(P)}{(p_{22})^2}x_1^2
\:
\cdot
\: 
\bigg[
\theta \sigma 
\left(
1 -\dfrac{l}{2}(1-\sigma)x_1 
\right)
+
1
\bigg].
\end{align*}
Since $p_{12}>0$ (see Appendix) and $0<\sigma<1$ by taking $\sigma$ as the sigmoid function \eqref{sigmoid}, 
\begin{align}\label{condition_l_inproof}
    1-\frac{l}{2}(1-\sigma)x_1 > 0
\end{align}
implies $L_FW(x) <0$ in the region under consideration.
The condition
\eqref{condition_l_inproof} in fact holds, because
\begin{align*}
1-\frac{l}{2}(1-\sigma)x_1 
& \geq  1-\frac{l}{2}(1-\sigma)\gamma\\
& > 1 - \frac{l}{2}\gamma\\
& = \frac{\gamma}{2}
\left(
\frac{2}{\gamma} - l
\right) \geq 0
\end{align*}
by selecting $l$ to satisfy \eqref{l_choosing}.
This completes the proof of the second item. 

{\it (Condition~\eqref{eq:weak_CLBF_c})} The third item is obviously true, because $W(0) = -k <0$. 

{\it (No stationary point)} From now on, we will prove that there is no stationary point on $\mathcal{U}$, except the origin.  
At the stationary point $x^*$, it is necessary that
\begin{align*}
        \frac{\partial W}{\partial x_1}(x^*) = \frac{\partial W}{\partial x_2}(x^*) =0.
\end{align*}
One can readily obtain (by following the same procedure for deriving \eqref{condition_l_inproof}) that, provided that $\partial W/\partial x_2 =  g(x)(p_{12}x_1 + p_{22}x_2)=0$ at $x=x^*$,
$\partial W/\partial x_1$ must have the form \eqref{eq:dWdx1}.
Since the parameters are selected to satisfy \eqref{condition_l_inproof}, $\partial W/\partial x_1$ vanishes only if $x_1 = 0$.
On the other hand, we already have $ p_{12}x_1 + p_{22}x_2 = 0$, which means that only $x^*=0$ makes $\partial W/\partial x$ be zero. 
It means that, the stationary point in $\mathcal{U}$ is uniquely determined as $x^* =0$.
\end{proof}



As a byproduct, the following lemma shows that $\mathcal{C}_\Omega$ in \eqref{C_Omega} can be a candidate for the set of initial conditions $x(0)$ with which the associated state trajectory $x(t)$ remains the safe set. 

\begin{lemma} \label{C_Omega_theorem}
    Under the same hypothesis of Theorem~\ref{Main_theorem}, $\mathcal{C}_\Omega\subset \mathcal{U}$. 
    $\hfill\square$
\end{lemma}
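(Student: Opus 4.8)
The plan is to prove the inclusion pointwise, showing directly that $W(x)\le 0$ for every $x\in\mathcal{C}_\Omega$. Fix an arbitrary $x=(x_1,x_2)\in\mathcal{C}_\Omega$; by the definition \eqref{C_Omega} this means $x\in\mathcal{X}$, $V(x)\le v_2$, and $x_1\ge d+\delta$. First I would record two elementary facts. (i) Since $P=P^\top>0$ by Assumption~\ref{asm:CLF}, the quadratic form $V(x)=\tfrac12 x^\top Px$ is nonnegative, so $0\le V(x)\le v_2$. (ii) The sigmoid $\sigma$ in \eqref{sigmoid} is strictly decreasing in $x_1$ because $l>0$; hence $x_1\ge d+\delta$ forces $\sigma(x_1)\le\sigma(d+\delta)=\sigma_2$.

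Next I would chain these two bounds. Because $1+\theta\sigma(x_1)>0$ (as $\theta>0$ and $\sigma>0$) and $V(x)\ge 0$, multiplying $V(x)\le v_2$ by the positive factor $1+\theta\sigma(x_1)$ preserves the inequality:
\[
\big(1+\theta\sigma(x_1)\big)V(x)\ \le\ \big(1+\theta\sigma(x_1)\big)v_2 .
\]
Then, using $\sigma(x_1)\le\sigma_2$ together with $\theta v_2\ge 0$,
\[
\big(1+\theta\sigma(x_1)\big)v_2\ \le\ \big(1+\theta\sigma_2\big)v_2\ =\ k,
\]
where the last equality is exactly the choice \eqref{k_choosing} of the DC offset. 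Combining the two displays gives $W(x)=\big(1+\theta\sigma(x_1)\big)V(x)-k\le 0$, i.e.\ $x\in\mathcal{U}$. Since $x\in\mathcal{C}_\Omega$ was arbitrary, this yields $\mathcal{C}_\Omega\subset\mathcal{U}$.

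As for difficulty, there is essentially no obstacle: the argument is a one-line monotonicity estimate, using only the positive definiteness of $P$, the monotone decrease of $\sigma$ over the slab $\{x_1\ge d+\delta\}$, and the definition \eqref{k_choosing} of $k$. The one point worth stating explicitly is that the bound is uniform over the whole slab, so the inclusion needs no extra constraint beyond membership in $\mathcal{C}_\Omega$; in particular the parameter conditions \eqref{delta_choosing} and \eqref{theta_choosing} are \emph{not} invoked here (they were already consumed in Theorem~\ref{Main_theorem} to make $W$ a weak CLBF), and this lemma rests solely on \eqref{k_choosing}.
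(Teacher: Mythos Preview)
Your proof is correct and follows essentially the same approach as the paper: bound $(1+\theta\sigma(x_1))V(x)$ above by $(1+\theta\sigma_2)v_2=k$ using $V(x)\le v_2$ and $\sigma(x_1)\le\sigma_2$ on $\mathcal{C}_\Omega$. The paper's version is simply the one-line compressed form of your argument, and your additional remarks on which parameter conditions are actually needed are accurate.
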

\begin{proof}
    For all $x \in \mathcal{C}_\Omega$, 
    \begin{align*}
        \begin{split}
W(x) & = (1 + \theta \sigma) V(x) - k\\
& \leq (1 + \theta \sigma _2)v_2 - (1+\theta \sigma_2)v_2 =0
\end{split}
    \end{align*}
    which completes the proof. 
\end{proof}

\section{Application: Safe Feedback Linearization for Constrained MIMO Systems}
As an important application,  in this section we illustrate how the proposed CLBF design method in the previous section can be employed to develop a nonlinear controller that resolves safe stabilization problem (possibly for nonlinear multi-input multi-output (MIMO) systems).
In particular, we will pay attention to developing a safety-guaranteeing version of the feedback linearization (FL)-based control, and to revealing that it can be applied to a safety-critical task space control for robot manipulators. 

\subsection{CLBF-based Safe Feedback Linearization}\label{subsec:safe_FL}
We consider a nonlinear MIMO system  having the form
\begin{subequations}\label{original_system}
    \begin{align} 
    \dot{\bf x}_1 & = {\bf x}_2,\\
    \dot{\bf x}_2 & = {\bf F}({\bf x}) + {\bf G}({\bf x}) {\bf u}
\end{align}
\end{subequations}
where ${\bf x}_i\in \mathbb{R}^n$ is the state, ${\bf u}\in \mathbb{R}^n$ is the input, and the square matrix ${\bf G}({\bf x}) \in \mathbb{R}^{n\times n}$ is invertible in a region of interest $\mathcal{X}$. 
Since \eqref{original_system} is feedback linearizable, a control law
\begin{align}\label{eq:FL}
    \phi({\bf x}):= {\bf G}^{-1}({\bf x})\big( -{\bf F}({\bf x}) - {\bf K}{\bf x} \big)
\end{align}
can easily stabilize the system \eqref{original_system}, where ${\bf K}$ is selected such that ${\bf A} - {\bf B}{\bf K}$ is Hurwitz, and
\begin{align*}
    {\bf A}:= \begin{bmatrix}
        {\bf 0} & {\bf I}\\
        {\bf 0} & {\bf 0} 
    \end{bmatrix}\in \mathbb{R}^{2n\times 2n},\quad {\bf B}:=\begin{bmatrix}
        {\bf 0}\\
        {\bf I}
    \end{bmatrix}\in \mathbb{R}^{2n\times n}.
\end{align*}

Now we assume that the system \eqref{original_system} has a set of linear inequality constraints
\begin{align}\label{eq:constraints}
    {\bf C}_{i} {\bf x} > d_i,\quad \forall i =1,\dots, m,
\end{align}
where ${\bf C}_i\in \mathbb{R}^{1\times n}$, $d_i\in \mathbb{R}$, and the number of the constraints, $m$, is assumed to be smaller than or equal to $n$: 
in other words, it is required that the controlled system is desired to be safe (in the sense of Definition~\ref{def:safety}) with respect to the unsafe set
\begin{align}\label{eq:unsafe_set_MIMO}
    \mathcal{D}:= \bigcup_{i=1}^{m}\{ {\bf x}\in \mathcal{X}: {\bf C}_i {\bf x} \leq d_i \} 
\end{align}
that is an union of hyperplanes. 
Without loss of generality, suppose that $\{{\bf C}_i, i=1,\dots,m\}$ is linearly independent. 
Obviously the existing control law ${\bf u} = \phi({\bf x})$ may not guarantee to satisfy the constraints \eqref{eq:constraints}, which allows us to refine the feedback linearization-based control as
\begin{align}\label{eq:controller_2}
    {\bf u}= \phi({\bf x}) + {\bf u}_{\rm safe}
\end{align}
where ${\bf u}_{\rm safe}\in \mathbb{R}^n$ represents an extra control input that will be utilized to achieve safety against the unsafe set $\mathcal{D}$. 

From now on, we will see that the proposed CLBF design in Section~\ref{sec:basic} is applicable to construction of such a ${\bf u}_{\rm safe}$. 
Since the theory in Section~\ref{sec:basic} is developed for scalar systems, it is first needed to introduce a coordinate change
\begin{align*}
    \overline{\bf x} :=\begin{bmatrix}
        \overline{\bf x}_1\\
        \overline{\bf x}_2
    \end{bmatrix} = \underbrace{\begin{bmatrix}
        {\bf P} & {\bf 0}\\
        {\bf 0} & {\bf P}
    \end{bmatrix}}_{=:{\bf \Phi}}\begin{bmatrix}
        {\bf x}_1\\
        {\bf x}_2
    \end{bmatrix},\quad {\bf P}:= \begin{bmatrix}
        {\bf C}\\
        {\bf D}
    \end{bmatrix}\in \mathbb{R}^{n}
\end{align*}
where the rows of ${\bf C}\in \mathbb{R}^{m\times n}$ are the very (linearly independent) row vectors ${\bf C}_i$, and ${\bf D}$ is selected such that ${\bf P}$ is nonsingular. 
In the new coordinate $\overline{\bf x}$, the system \eqref{original_system} is converted into
\begin{subequations}\label{eq:converted_system}
    \begin{align}
    \dot{\overline{\bf x}}_1 & = \overline{\bf x}_2,\\
    \dot{\overline{\bf x}}_2 & = {\bf P}{\bf F}( {\bf \Phi}^{-1} \overline{\bf x}) + {\bf P}{\bf G}({\bf \Phi}^{-1}\overline{\bf x}) {\bf u}
\end{align}
\end{subequations}
with the unsafe set
\begin{align}\label{eq:converted_unsafe_set}
    \overline{\mathcal{D}}:=   \bigcup_{i=1}^{m}\overline{\mathcal{D}}_i,\quad \overline{\mathcal{D}}_i:=\{ \overline{\bf x}\in \mathcal{X}: \overline{x}_{1i} \leq d_i \}.
\end{align}
Applying \eqref{eq:controller_2} and \eqref{eq:FL} to \eqref{eq:converted_system} then leads to
\begin{subequations}\label{eq:converted_system2}
    \begin{align}
    \dot{\overline{\bf x}}_1 & = \overline{\bf x}_2,\\
    \dot{\overline{\bf x}}_2 & = -{\bf P}{\bf K} {\bf \Phi}^{-1}\overline{\bf x} + {\bf P}{\bf G}( {\bf \Phi}^{-1}\overline{\bf x} ){\bf u}_{\rm safe}.
\end{align}
\end{subequations}
We now take ${\bf K}$ and ${\bf u}_{\rm safe}$ as 
\begin{align}
{\bf K} & = {\bf P}^{-1}{\rm blkdiag}\big( {\bf k}_1,\dots, {\bf k}_n \big){\bf \Phi}^{-1},\\
   {\bf u}_{\rm safe} & = {\bf G}^{-1}({\bf x}){\bf P}^{-1}{\bf a}_{\rm safe}\label{eq:u_safe}
\end{align}
with an auxiliary input ${\bf a}_{\rm safe}$, ${\bf k}_i = [k_{{\rm p},i}~k_{{\rm d},i}]$, $k_{{\rm p},i}>0$, and $k_{{\rm d},i}>0$, which brings $n$ {\it decoupled} subsystems
\begin{subequations}\label{eq:decoupled_system}
\begin{align}
    \dot{\overline{x}}_{1i} & = \overline{x}_{2i},\\
    \dot{\overline{x}}_{2i} & = -k_{{\rm p},i}\overline{x}_{1i} -k_{{\rm d},i}\overline{x}_{2i} + a_{{\rm safe},i}
\end{align}
\end{subequations}
and $m$ {\it decoupled} unsafe sets $\overline{\mathcal{D}}_i$ in \eqref{eq:converted_unsafe_set}. 
Notice that each subsystem \eqref{eq:decoupled_system} is exponentially stable by construction, so that the conditions in Assumption~\ref{asm:CLF} naturally hold. 

We are now ready to employ the findings in the previous section, by which one can obtain a set of the CLBFs, say $W_i$, $i=1,\dots,m$, with respect to each $\overline{\mathcal{D}}_i$. 
Note that Theorem~\ref{Main_theorem} ensures the existence of a nonlinear scaling-based CLBF $W_i$ in \eqref{CLBF_nonlinear_scaled} for the $i$-th decoupled subsystems \eqref{eq:decoupled_system}, with respect to $\overline{\mathcal{D}}_i$.
Then the safety-guaranteeing input ${\bf a}_{{\rm safe}}$ is chosen as
\begin{equation} \label{eq:sontag}
         a_{{\rm safe},i} = \begin{cases}
             k_{{\rm safe},i}\cdot\kappa_{\rm SU}\big(L_{\overline{F}_i}W_i, (L_{\overline{G}_i}W_i)^\top \big), & i=1,\dots,m\\
             0, & \text{otherwise}
         \end{cases}
\end{equation}
where $k_{{\rm safe},i}>0$ stands for the control gain for $a_{{\rm safe},i}$,
\begin{align*}
    \overline{F}_i(\overline{x}_i) := \begin{bmatrix}
        \overline{x}_{2i}\\
        k_{{\rm p},i}\overline{x}_{1i} + k_{{\rm d},i}\overline{x}_{2i}
    \end{bmatrix},\quad \overline{G}_i(\overline{x}_i) = \begin{bmatrix}
        0\\
        1
    \end{bmatrix}
\end{align*}
and $ \kappa_{\rm SU}(a,b)$ is the Sontag's universal formula 
\begin{align*}
    \kappa_{\rm SU}(a,b) := 
    \begin{cases}
     -\frac{a+\sqrt{a^2 + ||b||^4}}{b^{\top}b}b,  \quad& \text{if $b \neq 0$},\\
     0 & \text{otherwise}.
    \end{cases}
\end{align*}
It is well-studied in the literature (e.g., \cite{romdlony2016stabilization}) that for given CLBF $W_i$, the Sontag's universal formula-based controller \eqref{eq:sontag} guarantees that ${W}_i$ decreases as time goes on, which means that the safe stabilization is achieved. 
From this perspective, we call the modified FL-based controller \eqref{eq:controller_2}, \eqref{eq:u_safe}, and \eqref{eq:sontag} a {\it safe FL-based controller}. 

The following corollary summarizes the discussions that we have made so far. 
\begin{corollary} 
    Suppose that $\mathcal{D}$ in \eqref{eq:unsafe_set_MIMO} does not contain the origin ${\bf 0}$. 
    Then for the unsafe set $\mathcal{D}$ in \eqref{eq:unsafe_set_MIMO}, the state ${\bf x}(t)$ of the system \eqref{original_system} controlled by a safe FL-based controller \eqref{eq:controller_2}, \eqref{eq:u_safe}, and \eqref{eq:sontag} initiated in 
    \begin{align*}
        \{ {\bf x} = {\bf \Phi}^{-1}\overline{\bf x}:  W_i(\overline{x}_{i})\leq 0, \forall i=1,\dots,m \}
    \end{align*}satisfies
    \begin{itemize}
        \item (Stability) ${\bf x}(t)\rightarrow {\bf 0}$ as $t\rightarrow \infty$, and;
        \item (Safety) ${\bf x}(t) \notin {\rm cl}(\mathcal{D})$ for all $t\geq 0$.    $\hfill \square$
    \end{itemize}
\end{corollary}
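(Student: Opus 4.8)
The plan is to carry out the whole argument in the transformed coordinate $\overline{\mathbf{x}}=\mathbf{\Phi}\mathbf{x}$, in which the closed-loop system splits into the $n$ decoupled subsystems \eqref{eq:decoupled_system} driven by \eqref{eq:sontag}. Since $\mathbf{\Phi}$ is a linear isomorphism, $\overline{\mathbf{x}}(t)\to\mathbf{0}$ is equivalent to $\mathbf{x}(t)\to\mathbf{0}$, and $\overline{\mathbf{x}}(t)\notin\overline{\mathcal{D}}$ to $\mathbf{x}(t)\notin\mathcal{D}$. First I would note that, because $\mathcal{D}$ does not contain the origin, each defining constant $d_i$ is negative after normalising the sign of $\mathbf{C}_i$, so Assumption~\ref{asm:unsafe_set} applies to every decoupled unsafe set $\overline{\mathcal{D}}_i$ in \eqref{eq:converted_unsafe_set}; moreover \eqref{eq:decoupled_system} is exponentially stable, hence Assumption~\ref{asm:CLF} holds, and Theorem~\ref{Main_theorem} supplies for each $i=1,\dots,m$ a weak CLBF $W_i$ of the form \eqref{CLBF_nonlinear_scaled} whose safe set $\mathcal{U}_i=\{\overline{x}_i:W_i(\overline{x}_i)\le0\}$ is nonempty, compact (by properness and lower boundedness), and, by \eqref{eq:weak_CLBF_a}, disjoint from $\overline{\mathcal{D}}_i$. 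The two bullets then follow once I establish that each $\mathcal{U}_i$ is forward invariant under \eqref{eq:sontag} and that $\dot W_i$ vanishes on $\mathcal{U}_i$ only at the origin.

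For \emph{safety} I would verify, by the standard computation with Sontag's formula, that along \eqref{eq:decoupled_system}--\eqref{eq:sontag} one has $\dot W_i<0$ at every $\overline{x}_i\in\mathcal{U}_i\setminus\{0\}$: when $L_{\overline{G}_i}W_i\ne0$ the formula returns a strictly negative value assembled from $L_{\overline{F}_i}W_i$ and $\|L_{\overline{G}_i}W_i\|$ (the gain $k_{{\rm safe},i}>0$ being accounted for), and when $L_{\overline{G}_i}W_i=0$ condition \eqref{eq:weak_CLBF_b} gives $\dot W_i=L_{\overline{F}_i}W_i<0$, which is legitimate since $\mathcal{U}_i\cap\overline{\mathcal{D}}_i=\emptyset$. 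Because $\dot W_i\le0$ on the boundary $\{W_i=0\}$ and $\mathcal{U}_i$ is compact, $\mathcal{U}_i$ is forward invariant; the initial state lying in $\bigcap_{i=1}^m\mathcal{U}_i$ then keeps each $\overline{x}_i(t)$ in $\mathcal{U}_i$ for all $t\ge0$, so $\overline{\mathbf{x}}(t)\notin\overline{\mathcal{D}}_i$ for every $i$ and hence $\overline{\mathbf{x}}(t)\notin\overline{\mathcal{D}}=\bigcup_i\overline{\mathcal{D}}_i$. Since $\mathcal{D}$ (hence $\overline{\mathcal{D}}$, an intersection of closed sets with the compact $\mathcal{X}$) is closed, this is the same as $\overline{\mathbf{x}}(t)\notin\mathrm{cl}(\overline{\mathcal{D}})$, i.e. $\mathbf{x}(t)\notin\mathrm{cl}(\mathcal{D})$ for all $t\ge0$.

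For \emph{stability} I would invoke LaSalle's invariance principle on the compact forward-invariant set $\mathcal{U}_i$: $W_i(\overline{x}_i(t))$ is non-increasing and bounded below, hence convergent, and $\overline{x}_i(t)$ approaches the largest invariant subset of $\{\overline{x}_i\in\mathcal{U}_i:\dot W_i=0\}$, which by the estimate of the previous step (equivalently, by the ``no stationary point in $\mathcal{U}$ except the origin'' clause of Theorem~\ref{Main_theorem}) is exactly $\{\overline{x}_i=0\}$. Thus $\overline{x}_i(t)\to0$ for $i=1,\dots,m$; for $i=m+1,\dots,n$ we have $a_{{\rm safe},i}\equiv0$ and an exponentially stable linear subsystem, so $\overline{x}_i(t)\to0$ trivially. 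Therefore $\overline{\mathbf{x}}(t)\to\mathbf{0}$ and $\mathbf{x}(t)=\mathbf{\Phi}^{-1}\overline{\mathbf{x}}(t)\to\mathbf{0}$, which completes the proof.

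\textbf{Main obstacle.} I expect the delicate point to be the interplay with the region of interest $\mathcal{X}$: the weak-CLBF inequalities of Theorem~\ref{Main_theorem}, and the invertibility of $\mathbf{G}$ needed for $\mathbf{u}_{\rm safe}$ in \eqref{eq:u_safe}, hold only on $\mathcal{X}$, so the forward-invariance and LaSalle arguments are rigorous only if the trajectory never leaves $\mathcal{X}$ --- which requires either $\mathbf{\Phi}^{-1}\!\big(\bigcap_i\mathcal{U}_i\big)\subseteq\mathcal{X}$ or forward invariance of $\mathcal{X}$ itself. Together with confirming that the chosen gains $k_{{\rm safe},i}$ do not destroy $\dot W_i<0$ in the Sontag computation, this is where the proof demands the most care.
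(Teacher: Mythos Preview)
Your proposal is correct and follows the same route as the paper, which offers no formal proof but presents the corollary as a summary of the preceding discussion: decouple via $\mathbf{\Phi}$ into the subsystems \eqref{eq:decoupled_system}, invoke Theorem~\ref{Main_theorem} to obtain each $W_i$, and appeal to \cite{romdlony2016stabilization} for the fact that Sontag's formula drives $W_i$ downward and hence yields safe stabilization. You simply make explicit the forward-invariance and LaSalle steps the paper delegates to the literature; the obstacles you flag (confinement to $\mathcal{X}$ and the effect of $k_{{\rm safe},i}\neq 1$ on the sign of $\dot W_i$) are genuine subtleties that the paper does not address either.
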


\begin{remark}
    It is remarkable that the expression \eqref{eq:controller_2} introduces a possibility of utilizing a CLBF-based safety-critical control law ${\bf u}_{\rm safe}$ as an {\it add-on type control}: in other words, one can ensure the safety of a controlled system simply by adding an extra input ${\bf u}_{\rm safe}$ into the existing control loop with $\phi(x)$. 
    From this perspective, the generating law for ${\bf u}_{\rm safe}$ in \eqref{eq:controller_2} can be interpreted as a new type of the {\it safety filter} that has been extensively studied in the literature. 
    $\hfill\square$
\end{remark}



\subsection{Example: Safe Task-space Control for Two-link Robot Manipulator}

The proposed safe FL is applicable to a wide range of industrial problems where the system to be controlled has the form \eqref{original_system}.  
As an example, in this subsection we address the problem of task-space control for planar two-link robot manipulator, particularly with guaranteed safety. 
To this end, consider a robot dynamics
 \begin{equation}  \label{manipulator_dynamics_joint}
      {\bf M}({\bf q})\ddot{{\bf q}} + {\bf c}({\bf q},\dot {\bf q}) + {\bf g}({\bf q}) = {\bm \tau}
 \end{equation}
with the joint angle ${\bf q} := (\theta_1, \theta_2) \in \mathbb{R}^2$, the control input ${\bm \tau}\in \mathbb{R}^2$, the mass matrix ${\bf M}$ given by
\begin{displaymath}
    {\bf M}({\bf q}) = 
    \begin{bmatrix}
        {\bf M}_{11} & {\bf M}_{12}\\
        {\bf M}_{21} & {\bf M}_{22}
    \end{bmatrix} = {\bf M}({\bf q})^\top >0
\end{displaymath}
{
where
\begin{align*}
    & {\bf M}_{11} := m_1 L_1^2 + m_2 (L_1^2 + 2L_1 L_2 \cos\theta_2 + L_2^2)\\
    & {\bf M}_{12} := {\bf M}_{21} = m_2 (L_1 L_2 \cos\theta_2 + L_2^2), \: {\bf M}_{22} := m_2 L_2^2,
\end{align*}
}
the torque vector related to the centrifugal and Coriolis effects
\begin{displaymath}
{\bf c}({\bf q}, \dot{{\bf q}}) =
\begin{bmatrix}
- m_2 L_1 L_2 \sin\theta_2 (2\dot{\theta}_1 \dot{\theta}_2 + \dot{\theta}_2^2) \\
m_2 L_1 L_2 \dot{\theta}_1^2 \sin\theta_2
\end{bmatrix},
\end{displaymath}
and the gravitational torque
\begin{displaymath}
{\bf g}({\bf q}) =
\begin{bmatrix}
(m_1 + m_2) L_1 {\rm g} \cos\theta_1 + m_2 {\rm g} L_2 \cos(\theta_1 + \theta_2) \\
m_2 {\rm g} L_2 \cos(\theta_1 + \theta_2)
\end{bmatrix}.
\end{displaymath}
In the above equations, $m_1$ and $m_2$ represent the masses of each body,  $L_1$ and $L_2$ are the length of the bodies, and $\textrm{g}$ is the gravitational acceleration. 

The forward kinematics of the robot maps the joint angle  ${\bf q}\in \mathbb{R}^2$ to the Cartesian position ${\bf p} :=({\bf p}_1, {\bf p}_2)\in \mathbb{R}^2$ of the end-effector of the robot  as follows:
\begin{equation*}
    {\bf p} := \psi({\bf q}) = \begin{bmatrix}
L_1\cos\theta_1 + L_2\cos(\theta_1 + \theta_2)
\\
L_1\sin\theta_1 + L_2\sin(\theta_1 + \theta_2)
\end{bmatrix}.
\end{equation*}
Then one readily derives the relation between the joint velocity $\dot{\bf q}$ and the Cartesian velocity $\mathbf v = \dot {\mathbf{p}}$ as $ {\mathbf v} = {\mathbf J} \dot {\mathbf q}$ with the Jacobian matrix ${\bf J} :=(\partial {\psi})/(\partial {\bf q})$ given by  
\begin{displaymath}
{\bf J} =  \begin{bmatrix}
-L_1\sin\theta_1 - L_2\sin(\theta_1 + \theta_2)
& -L_2\sin(\theta_1 + \theta_2)
\\
L_1\cos{\theta}_1 + L_2\cos(\theta_1 + \theta_2)
&  L_2\cos(\theta_1 + \theta_2)
\end{bmatrix}.
\end{displaymath}
After some computations, we express the robot dynamics \eqref{manipulator_dynamics_joint} written in the joint space into the task-space dynamics in the Cartesian coordinate (at least on the region where ${\bf J}$ has full row rank)
 \begin{equation} \label{2link_dynamics_cartesian}
      {\bf M}_{{\bf p}}({\bf q})\ddot{{\bf p}} + {\bf c}_{{\bf p}}({\bf q},\dot {\bf q}) + {\bf g}_{{\bf p}}({\bf q}) = \mathcal{F},
 \end{equation}
where $\mathcal{F} := {\bf J}^{-\top}{\bm \tau} \in \mathbb{R}^2$ is a force acting on the end-effector (which can be regarded as a control input), and
\begin{align*} 
{\bf M}_{{\bf p}} & := {\bf J}^{-\top}{\bf M}{\bf J}^{-1}, ~ {\bf c}_{{\bf p}}:= -{\bf M}_{{\bf p}}\dot{{\bf J}}\dot {\bf q} + {\bf J}^{-\top}{\bf c},\\
{\bf g}_{{\bf p}} & := {\bf J}^{-\top} {\bf g}.
\end{align*}

For the robot manipulator whose dynamics is expressed in the Cartesian space as in \eqref{2link_dynamics_cartesian}, we aim to finding a safe FL-based controller $\mathcal{F}$ that solves the regulation problem defined in the Cartesian space for a desired point ${\bf p}_{\rm d}$,
while guaranteeing the constraints
\begin{align}\label{example_constraints}
    {\bf p}_1 < -\hat{d}_1,\quad {\bf p}_2 > \hat{d}_2
\end{align}
for some $\hat{d}_i\in \mathbb{R}$. 
To apply the safe FL theory in the previous subsection to this example, define
\begin{align*}
    {\bf x} = \begin{bmatrix}
        {\bf x}_1\\
        {\bf x}_2
    \end{bmatrix},\quad \text{where}~~{\bf x}_1 = \begin{bmatrix}
        -{\bf p}_1 + {\bf p}_{{\rm d},1}\\
        {\bf p}_2 - {\bf p}_{{\rm d},2}
    \end{bmatrix},~~
    {\bf x}_2 = \begin{bmatrix}
        -\dot{\bf p}_1\\
        \dot{\bf p}_2
    \end{bmatrix},
\end{align*}
which admits the unsafe set ${\mathcal{D}}$ to have the form $\overline{\mathcal{D}}$ in \eqref{eq:converted_unsafe_set} with $d_i = {\bf p}_{{\rm d},i} - \hat{d}_i$, $i=1,2$. 
Moreover, the task-space dynamics \eqref{2link_dynamics_cartesian} can be represented as \eqref{original_system} (or equivalently, \eqref{eq:converted_system}) with ${\bf \Phi}=I$, while the rest expansion is omitted due to clarity of explanation.


Finally, by following the same procedure of Subsection~\ref{subsec:safe_FL}, we obtain a safe FL-based controller
\begin{subequations} \label{FL_force}
\begin{align} 
\mathcal{F} & = \phi({\bf x}) + \mathcal{F}_{\rm safe}, \label{FL_force_a}\\
\phi({\bf x}) & = {\bf M}_{{\bf p}}({\bf q}) \left(  - {\bf K}{\bf x}\right) + {\bf c}_{{\bf p}}({\bf q},\dot {\bf q}) + {\bf g}_{{\bf p}}({\bf q}),\label{FL_force_b}\\
\mathcal{F}_{\rm safe} & = {\bf M}_{\bf p}({\bf q}) {\bf a}_{\rm safe} \label{FL_force_c}
\end{align}
\end{subequations}
where ${\bf a}_{\rm safe}$ is designed as \eqref{eq:sontag}, and each $W_i$ has the form \eqref{CLBF_nonlinear_scaled} derived from the nonlinear scaling-based design method. 

\subsection{Simulation Result}
Now we perform a simulation for numerical case of given example to verify the validity of proposed control scheme. Consider system \eqref{manipulator_dynamics_joint} with $m_1 = m_2 = 0.8$ and $L_1 = L_2 = 1$ which is assumed to be operated in a given compact set $\mathcal{X} := \{(\mathbf{p}, \mathbf v) \in \mathbb{R}^4 : -0.2 \leq {\bf p}_{1} \leq 1.5, -1.0 \leq {\bf p}_{2} \leq 0.5, -2.5\leq \mathbf{v}_i \leq 2.5 \quad (i=1,2)\} $. With initial condition $\mathbf{p} = (1.0, 0.4)$ and $\mathbf{v} = (1.5, -2,5)$, the system is expected to maintain desired state $\mathbf{p}_{\mathrm{d}} = (0.3, 1.0)$ while guaranteeing the constraints \eqref{example_constraints} with $\hat d_1= -1.3, \hat d_2 = -0.3$.  

To construct the safe FL-based controller, first take ${\bf \Phi} = I_2$ and ${\bf K}$ with $k_{\mathrm{p},i}$, $k_{\mathrm{d},i}$ shown in Table I for \eqref{FL_force_b} to decouple a given system. Then CLFs of decoupled subsystems \eqref{eq:decoupled_system} in form of \eqref{asm:stability_a} are able to be found by employing  $P_i$ for each $P$ of CLFs, which satisfies
\begin{align*}
 \begin{bmatrix}
     0 & -k_{\mathrm{p},i}\\
     1 & -k_{\mathrm{d},i}
 \end{bmatrix}P_i + P_i\begin{bmatrix}
     0 & 1\\
     -k_{\mathrm{p},i} & -k_{\mathrm{d},i}
 \end{bmatrix}    = 
   -  \begin{bmatrix}
         1.0 & -0.9\\
         -0.9 & 1.0
     \end{bmatrix},
\end{align*}
$i =1,2$.
Next, using the CLFs found, select $l_i, \delta_i, \theta_i, k_i$ in Table I to construct nonlinear-scaled CLBFs. At the initial point, the CLBF values are $W_1=-0.43, W_2 = -2.33$, confirming that initial condition satisfies the set condition in Corollary 1. Apply \eqref{eq:sontag} for safety input \eqref{FL_force_c}.

\begin{table}[]
    \centering
    \begin{tabular}{|c||c|c|c|c|c|c|}
    \hline
          & $k_{{\rm p},i}$& $k_{{\rm d},i}$& $l_i$ & $\delta_i$ & $\theta_i$ & $k_i$ \\
    \hline
    \hline
         $i=1$ & 1.5 & 1.0 & 4.0 & 0.28 & 50.0 & 12.7\\
    \cline{1-7}
         $i=2$ & 1.0 & 1.0 & 4.0 & 0.58 & 6.1 & 10.67 \\
    \hline
    \end{tabular}
    \caption{Control parameters used in simulation}
    \label{tab:my_label}
\end{table}
\begin{figure}[h!]
    \centering
    \includegraphics[width=3.0in]{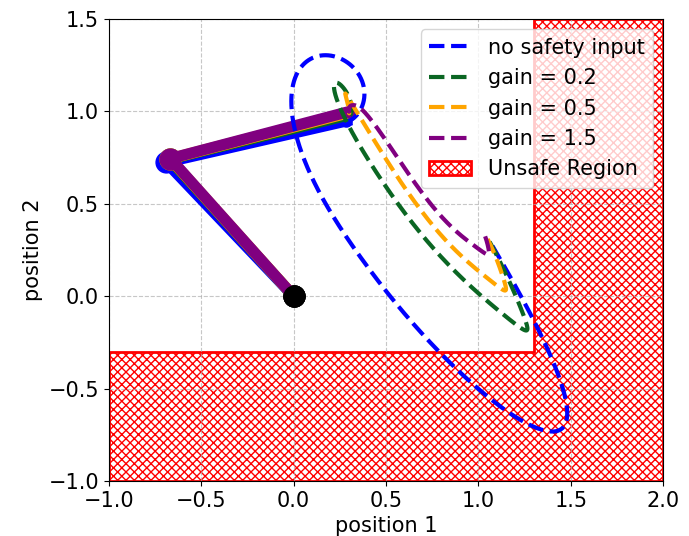}
    \caption{
    {
    End-effector trajectories of two-link manipulators controlled by the conventional (non-safe) FL-based controller (blue) and the proposed safe FL-based controllers with various values of $\mathbf{k}_{\textrm{safe},i}=0.2$, $0.5$, and $1.5$ (applied equally for $i=1,2$): 
    Position 1 and 2 in the figure represents $\mathbf{p}_1$ and  $\mathbf{p}_2$, respectively. 
    Increasing the safety gain drives the position of the end-effector to avoid the unsafe set (red shaded).
    }
    }
    \label{fig:trajectory}
\end{figure}
\begin{figure}
\centering
    \subfigure[All operation duration]{\includegraphics[width=0.9\linewidth]{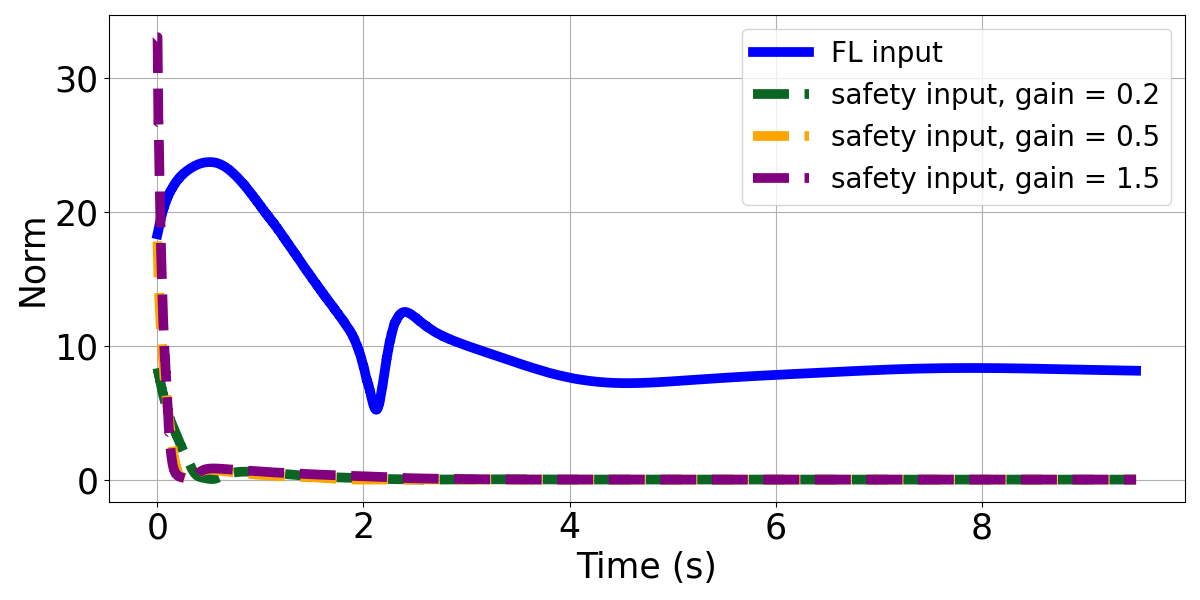}}
    \hfill
    \subfigure[Near initial condition]{\includegraphics[width=0.9\linewidth]{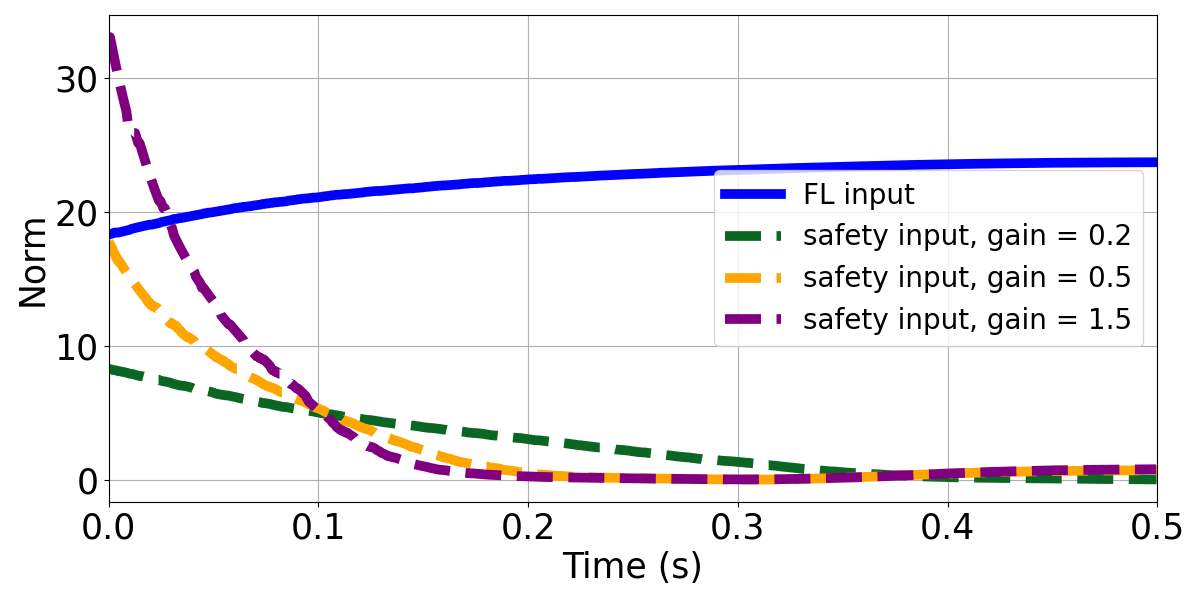}}
    \caption{Input norm of the only applied feedback linearization, and norm of the added $\mathcal{F}_{\textrm{safe}}$ for each case of different gains $k_{\textrm{safe},i}$. The extra safety input $\mathcal{F}_{\textrm{safe}}$ has significant effect only while avoiding to enter the unsafe sets, while its influence becomes minor when the state is far from the unsafe sets.}
    \label{fig:Input_2norm}
\end{figure}





Fig.~\ref{fig:trajectory} shows the resulting trajectory of safe FL-base control with various safety gains $\mathbf{k}_{\mathrm{safe},i}$, compared to the trajectory when only FL control $\phi(x)$ is applied.
Fig.~\ref{fig:Input_2norm} plots comparison of the input force norms between $\phi(x)$ and $\mathcal{F}_{\textrm{safe}}$ corresponding to the situations in Fig.~\ref{fig:trajectory}.

\section{Conclusion}
In this paper, we proposed a systematic design method of a CLBF for $2$nd-order systems having the relative degree 2, via the nonlinear scaling based on a sigmoid function. 
A basic result was first developed for single-input cases, while its extension to a class of multi-input cases was also discussed. 
It was in turn seen that the theory for the CLBF design can be applied to refining the existing FL method to be safe with respect to an unsafe set that can be expressed as an union of  hyperplanes. 
We showed that this safe FL approach allows to address the safe stabilization problem for a robot manipulator in the task space. 
Future works will include extension of the present result to higher-order cases as well application to safe control of robots in real world.


\addtolength{\textheight}{-12cm}   



\section*{Appendix}

We claim that all the $(i,j)$-component $p_{ij}=[P]_{ij}$ of the square matrix $P$ satisfying Assumption~\ref{asm:CLF} are positive. 
Since it is obvious that $p_{11}, p_{22} >0$ by the positive definiteness of $P$, 
it is enough to see $p_{12}=p_{21}>0$.
Recall that the set of $x$ satisfying $L_GV(x) =0$ forms a subspace where \eqref{L_GW=0_condition} holds. 
On the subspace, $L_FV(x)$ (that is expected to be negative) is computed as
\begin{align*}
L_FV(x)&= \dfrac{\partial V}{\partial x_1}x_2 + \dfrac{\partial V}{\partial x_2} f(x) = \dfrac{\partial V}{\partial x_1} x_2\\
& =(p_{11}x_1 + p_{12}x_2)x_2 = p_{11}x_1x_2 + p_{12}x_2^2 \\
& = -p_{12}x_1^2 \Big(\dfrac{p_{11}}{p_{22}} - \dfrac{p_{12}^2}{p_{22}^2} \Big) = -\dfrac{p_{12}}{p_{22}^2}x_1^2\left(
{\rm det}(P) 
\right)
\end{align*}
Noting that $L_F V(x)<0$ by the CLF condition and  ${\rm det}(P) = p_{11}p_{12} - p_{12}^2>0$, it follows that $p_{12} >0$.


\nocite{*}
\bibliographystyle{IEEEtran}
\bibliography{IEEEabrv, references}


\end{document}